\newtheorem{thm}{Theorem}[]
\newtheorem{cor}{Corollary}[]
\theoremstyle{remark}
\theoremstyle{definition}
\begin{document}

\title{A Partial Decode-Forward Scheme For A Network with $N$ relays}
\author{
Yao Tang\\
\authorblockA{ECE Department, McGill University\\
Montreal, QC, Canada\\
Email: yao.tang2@mail.mcgill.ca}
\and
Mai Vu\\
\authorblockA{
ECE Department, Tufts University\\
Medford, MA, USA\\
Email: maivu@ece.tufts.edu}
}
\maketitle

\begin{abstract}
We study a discrete-memoryless relay network consisting of one source, one
destination and $N$ relays, and design a scheme based on partial
decode-forward relaying. The source splits its message into one common and
$N+1$ private parts, one intended for each relay. It encodes these message
parts using $N$th-order block Markov coding, in which each private message
part is independently superimposed on the common parts of the current and
$N$ previous blocks. Using simultaneous sliding window decoding, each relay
fully recovers the common message and its intended private message with the
same block index, then forwards them to the following nodes in the next
block. This scheme can be applied to any network topology. We derive its
achievable rate in a compact form. The result reduces to a known
decode-forward lower bound for an N-relay network and partial
decode-forward lower bound for a two-level relay network. We then apply the
scheme to a Gaussian two-level relay network and obtain its capacity lower
bound considering power constraints at the transmitting nodes.
\end{abstract}

\IEEEpeerreviewmaketitle
\section{Introduction}\label{sec:intro}
The relay channel first introduced by van der Meulen \cite{meulen1971three} consists of a source aiming to communicate with a destination with the help of a relay. In \cite{cover1979capacity}, Cover and El Gamal introduce the cut-set bound and two coding strategies, namely decode-forward and compress-forward, for the basic three-node relay channel. By allowing the relay to decode only a part of the transmitted message, partial decode-forward can be considered as the generalization of decode-forward \cite{cover1979capacity}, \cite{arefPDF}.

In the past few years, substantial research activities have been dedicated to extending the classical one-relay channel to a general relay network consisting of $N$ communicating parties. In \cite{gastpar2005large_gaussian_RN}, Gastpar and Vetterli discuss the asymptotic capacity in the limit as the number of relays tends to infinity and the scaling behavior of capacity for a large class of Gaussian relay networks. Recently, Lim, Kim, El Gamal and Chung propose a compress-forward based scheme (noisy network coding) \cite{sung2011noisy} for the general multi-source multicast noisy network, which includes network coding \cite{ahlswede2000networkInfoFlow} as a special case.\par

 However, it is still unclear how to generalize decode-forward relaying to the multi-source multicast network. In \cite{Xie2005multilevelrelay}, Xie and Kumar analyze a multiple-level relay channel with one source and one destination and give an achievable rate based on full decode-forward. This scheme is extended in \cite{aref2007pdfnetwork}, in which all relays successively decode only part of the messages of the previous relay, and obtains the capacity of semi-deterministic and orthogonal relay networks. In \cite{aref2009pdftwolevel} and \cite{aref2010pdftwolevel}, Ghabeli and Aref generalize partial decode-forward in a two-level relay network, considering all possible partial decoding conditions that can occur among different message parts of at the source and the relays.

There are three common approaches for the decode-forward strategy, namely: (a) irregular encoding/sequential decoding; (b) regular encoding/simultaneous sliding window decoding; (c) regular encoding/backward decoding \cite{Kramer2005cooperative}. For semi-deterministic relay networks \cite{aref2011SSRN}, the second and the third approaches can achieve the same rate, which is greater than that of the first approach. Furthermore, the second approach creates less delay than the third one.\par
In this paper, we propose a novel transmission scheme for a single-source single-destination network with $N$ relays based on regular encoding and simultaneous sliding window decoding. The source splits its message into one common and $N+1$ private parts  and performs block Markov coding. Each relay helps forward the common part and the private message part intended for itself. We derive the achievable rate in a compact form and show that this scheme can reduce to the network decode-forward scheme of \cite{cover1979capacity} and partial decode-forward for two-level relay network in \cite{aref2010pdftwolevel}. Finally, we analyze a two-level relay network in AWGN environments and provide the achievable rate.\par
\begin{figure}[t]
\center
  \includegraphics[scale=0.65]{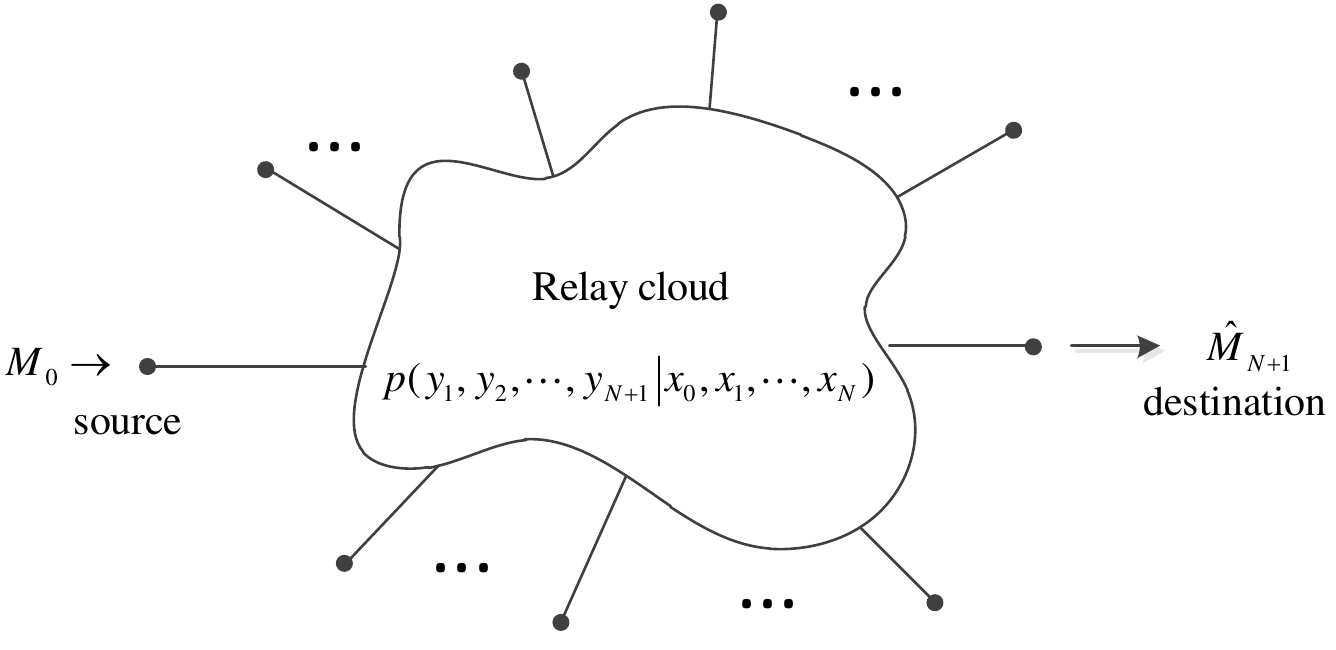}
  \caption{General discrete memoryless relay network}\label{p4}
\end{figure}
\section{Preliminaries}\label{sec_pre}
\subsection{Discrete Memoryless Relay Networks}
Consider a discrete memoryless relay network (DM-RN) with $N+2$ nodes ($\mathcal{X}_0\times\mathcal {X}_1\times \dots \times \mathcal {X}_{N}$, $p(y_1,y_2,\dots,y_{N+1}|x_0,x_1,\dots,x_{N})$, $\mathcal{Y}_1\times\mathcal{Y}_2\times \dots \times \mathcal {Y}_{N+1}$), where source node 0 wants to send a message $M$ to the destination node $N+1$ with the help of relay nodes $1,\dots,N$, as shown in Figure \ref{p4}. A $(2^{nR},n)$ code for this DM-RN consists of:
\begin{itemize}
\item
A message set $\mathcal{M}=[1:2^{nR}].$
\item
A source encoder that assigns a codeword $x_0^n(m)$ to each message $m\in [1:2^{nR}]$.
\item
A set of relay encoders $k\in[1:N]$, which assigns a symbol $x_{ki}(y_{k}^{i-1})$ to each received sequence $y_{k}^{i-1}$ for $i\in[1:n]$.
\item
A destination decoder, which assigns an estimate $\hat{m}_{N+1}$ to each received sequence $y_{N+1}^n \in \mathcal{Y}_{N+1}^n$, or declares an error message $e$.
\end{itemize}\par
Definitions for the average error probability, achievable rate and capacity follow the standard ones in \cite{gamal2010lecture}.
\subsection{Definitions and Notation}
To make the following analysis more concise and readable, we introduce some definitions and clarify notation in this section.
\begin{itemize}
\item
Define $T=\{1,\dots,N\}$ as the complete set of all relays.
\item
Define $S$ to be a subset of $T$, that is $S\subseteq T$ and $S^c=T-S$. Either $S$ or $S^c$ can be empty and the largest $S$ is $T$.
\item
Denote $M_{i}^{j}=\{M_{i},M_{i+1},\dots,M_{j}\}$, where $j\geq i$. For example, $U_2^{N-1}$ means $\{U_2,U_3,\dots,U_{N-1}\}$, where $N\geq 3$.
\item
Given nonempty set $L$ and variable $M$, let $M_L=\{M_i\}_{i\in L}=\{M_a,M_b,M_c,\dots\}$, where $a,b,c,\dots\in L$ and are different from each other. $|L|$ signifies the cardinality of $L$.
\end{itemize}
\section{A Network Partial Decode-Forward Scheme}\label{sec_network_pdf}
\begin{figure}
\center
  \includegraphics[scale=0.65]{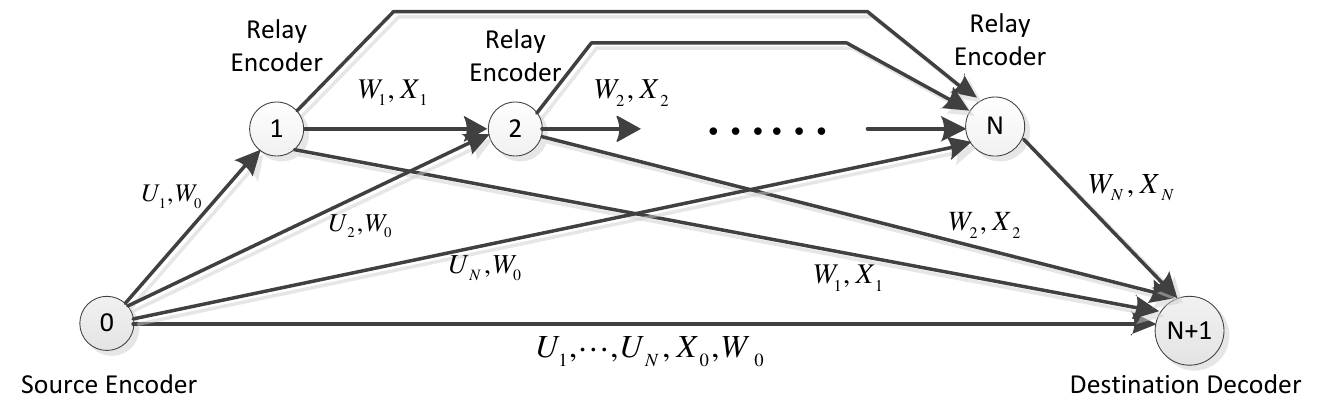}
  \caption{The proposed private message scheme for a single-source single-destination network with $N$ relays} \label{pdf_RN_1}
\end{figure}
Consider a network consisting of a single source, single destination and $N$ relays as in Figure \ref{pdf_RN_1}. The source has direct links to all relays and to the destination, and connections among the relays and the destination are arbitrary. We design a novel coding scheme for this relay network based on partial decode-forward relaying.

Let all relay nodes be ordered in an arbitrary order permutation $\pi(\cdot)$. In each order, we assume that the $k$th relay decodes information from all nodes below it, (i.e. order $\{1,\dots,k-1\}$) and forwards information to nodes above it (i.e. order $\{k+1,\dots,N\}$). Next, we will describe the scheme for the nominal order $\pi=[1,2,\dots,N]$ to simplify the notation, keeping in mind that it can also be applied to any other order $\pi$.

The new idea in this scheme is the way it performs rate splitting. At each block transmission, the source splits its message into $N+2$ parts: a common message and $N+1$ private messages, one intended for each relay and one for the destination. These messages are then encoded using $N$th order block Markov coding. Each relay fully recovers the common message and its intended private message with same block index as the common message, then forwards them together in the next block.

Specifically, let the source message in block $j$ be split as $m_j=(m_{0j},m_{1j},\dots,m_{(N+1)j})$, where $m_0$ denotes the common message that is forwarded among all relays, $m_k$ denotes the message intended to be decoded at relay $k$, but not at other relays, and $m_{N+1}$ denotes the message intended to be decoded only at the destination. The rate is $R=\sum_{i=0}^{N+1}R_i$. \par
Block Markov superposition coding is used to generate the independent codewords in each block as follows (for simpler notation we suppress block index in codewords here, but will include it in the detailed proof later).
\begin{itemize}
\item
$W_k, k\in \{0\}\cup T$, carries common message $m_{0,j-k}$ of different blocks. $W_k$ are successively superimposed on each other as in block Markov encoding.
\item
$\{U_k\}, {k\in T}$, carries private  message $m_k$ to be decoded at relay $k$ and not decoded at other relays. Each $U_k$ is superimposed on all $W_k$.
\item
$X_k, k\in T$, is the codeword sent by relay $k$ which supports the forwarding of the message in $U_k$ (of the previous block) and all $W_l$ ($l\leq k$).
\item
$X_0$ is the codeword sent by the source which carries all messages including the remaining message $m_{N+1}$ to be decoded only at the destination. $X_0$ is superimposed on all $\{W_0^N\}$, $\{U_1^N\}$ and $\{X_1^N\}$.
\end{itemize}\par
At each block $j$, the source sends $X_0$ which contains all $\{W_0^N\}$, $\{U_1^N\}$ and $\{X_1^N\}$. Each relay $k$ decodes $\{W_0,\dots,W_{k-1}\}$ from all previous nodes and $U_k$ from the source. Then in block $j+1$, it transmits $X_k$ which carries its private message of block $j-k$ $(m_{k,j-k})$ superimposed on all previous-block common messages. The destination uses joint decoding simultaneously over all blocks. Specifically, it waits until the end of the last block to decode all messages carried by $\{W_0^N\}$, $\{U_1^N\}$ and $\{X_0^N\}$ simultaneously using signals received in the last $N$ blocks. This coding scheme is illustrated in Figure \ref{pdf_RN_1}.
\begin{thm}
For a single-source single-destination network with $N$ relays ($\mathcal{X}_0\times\mathcal {X}_1\times \dots \times \mathcal {X}_{N}$, $p(y_1,y_2,\dots,y_{N+1}|x_0,x_1,\dots,x_{N})$, $\mathcal{Y}_1\times\mathcal{Y}_2\times \dots \times \mathcal {Y}_{N+1}$), by using partial decode-forward, the capacity $C$ is lower bounded by (\ref{thm_pdf_DM-RN}), where $\pi(\cdot)$ denotes a permutation order for the relay nodes.
\begin{figure*}
\begin{align}
\label{thm_pdf_DM-RN}
C\geq\sup_{\pi(T)}\sup_{p_\pi}
\min_{ S\subset \pi(T)}
 \min\left\{
\begin{array}{ll}
I\left(U_{1}^{N},X_{0}^{N},W_{0}^{N};Y_{N+1} \right),\\
I\left(X_{0},X_{S},U_{S};Y_{N+1}|X_{S^c},U_{S^c},W_{0}^{N}\right)
+\min_{\pi_j\in S^c}I\left(W_0,W_{\pi_1}^{\pi_{j-1}};Y_{\pi_{j}}|X_{\pi_j},W_{\pi_j}^{\pi_N}\right)\\
+\sum_{\pi_j\in S^c} I\left(U_{\pi_j};Y_{\pi_j}|W_{0}^{N},X_{\pi_j}\right)
\end{array}
\right.
\end{align}
where $T=\{1,2,\dots,N\}$, $\pi(T)$ is a permutation order of $T$, $S$ is a subset of $\pi(T)$, and
\begin{align}
\label{coding_distribution}
 p_{\pi}=p(U_1^{N},W_0^{N},X_0^{N})
 = p(X_0|U_1^{N},W_0^{N},X_1^{N})
\prod_{\pi_k =\pi_1}^{\pi_N}p(U_{\pi_k}|W_0^{N},X_{\pi_{k}})p(W_0|W_1^{N})
\prod_{\pi_k =\pi_1}^{\pi_N}p(X_{\pi_k}|W_{\pi_k}^{\pi_N})p(W_{\pi_k}|W_{\pi_{k+1}}^{\pi_N}).
\end{align}
\end{figure*}
\end{thm}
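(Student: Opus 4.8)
The plan is to prove achievability by a random coding argument built on $N$th-order block Markov superposition coding across $B$ transmission blocks, with each relay using simultaneous sliding window decoding and the destination using joint typicality decoding over the final $N$ blocks. I would first generate the codebook according to the factorization in (\ref{coding_distribution}): for every block, independently draw the nested common-message codewords $W_{\pi_N},\dots,W_{\pi_1},W_0$ in the superposition order prescribed by $p(W_{\pi_k}|W_{\pi_{k+1}}^{\pi_N})$ and $p(W_0|W_1^N)$, then superimpose the relay codewords $X_{\pi_k}$ via $p(X_{\pi_k}|W_{\pi_k}^{\pi_N})$, the private codewords $U_{\pi_k}$ via $p(U_{\pi_k}|W_0^N,X_{\pi_k})$, and the source codeword $X_0$ via $p(X_0|U_1^N,W_0^N,X_1^N)$. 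Encoding then assigns, in block $j$, the fresh common message $m_{0,j}$ to $W_0$ and the earlier common messages $m_{0,j-1},\dots,m_{0,j-N}$ to $W_1,\dots,W_N$, while relay $\pi_k$ transmits $X_{\pi_k}$ carrying the common and private messages it decoded in previous blocks; this realizes the intended $N$th-order Markov chain and the prescribed delays.

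Next, for the relay decoding I would have relay $\pi_j$ jointly decode, by sliding window over consecutive blocks, the common-message components $\{W_0,W_{\pi_1},\dots,W_{\pi_{j-1}}\}$ received from the nodes ordered below it together with its own private message $U_{\pi_j}$ sent by the source, while conditioning on the already-known $X_{\pi_j}$ and the previously decoded $W_{\pi_j}^{\pi_N}$. Applying the packing lemma to this joint-typicality decoder shows that its error probability vanishes as $n\to\infty$ provided the common rate satisfies $R_0 \le I(W_0,W_{\pi_1}^{\pi_{j-1}};Y_{\pi_j}|X_{\pi_j},W_{\pi_j}^{\pi_N})$ and the private rate satisfies $R_{\pi_j}\le I(U_{\pi_j};Y_{\pi_j}|W_0^N,X_{\pi_j})$. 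Because the single common message must be recovered by every relay while each private message is recovered only at its own relay, taking the binding constraint produces the bottleneck term $\min_{\pi_j\in S^c} I(W_0,W_{\pi_1}^{\pi_{j-1}};Y_{\pi_j}|X_{\pi_j},W_{\pi_j}^{\pi_N})$ and the summation $\sum_{\pi_j\in S^c} I(U_{\pi_j};Y_{\pi_j}|W_0^N,X_{\pi_j})$ appearing in (\ref{thm_pdf_DM-RN}).

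For the destination I would use a joint typicality decoder that resolves $\{W_0^N,U_1^N,X_0^N\}$ simultaneously from the signals received over the last $N$ blocks. I would then enumerate the decoding error events by the subset $S\subset\pi(T)$ of relay indices whose codeword pairs $(X_S,U_S)$ are in error while those in $S^c$ are correct and treated as side information. For the all-correct event the packing lemma yields the global constraint $I(U_1^N,X_0^N,W_0^N;Y_{N+1})$, and for each subset $S$ it yields the multiple-access constraint $I(X_0,X_S,U_S;Y_{N+1}|X_{S^c},U_{S^c},W_0^N)$. Intersecting these destination constraints with the relay constraints from the previous step, adding the per-hop contributions that the sliding window accumulates for the messages routed through $S^c$, minimizing over $S$, and finally letting $B\to\infty$ so that the block-Markov rate loss is negligible gives (\ref{thm_pdf_DM-RN}); the outer suprema over $p_\pi$ and over the order $\pi(T)$ follow by optimizing the input distribution and the relay ordering.

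I expect the main obstacle to be the combinatorial error analysis at the destination. Since decoding spans $N$ overlapping blocks under the $N$th-order Markov structure, every error event couples codewords belonging to several block indices, and for each subset $S$ one must identify precisely which codewords are conditioned on as correctly known versus jointly resolved, so that the resulting expression collapses into the compact conditional mutual information in (\ref{thm_pdf_DM-RN}). Confirming that the per-block sliding window contributions add up consistently, and in particular that the common-message requirement across all relays reduces to the single $\min_{\pi_j\in S^c}$ term rather than to a sum over relays, is where I anticipate the bulk of the technical effort.
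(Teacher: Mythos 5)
Your overall architecture --- codebook generation per (\ref{coding_distribution}), $N$th-order block Markov encoding, sliding-window decoding at the relays, destination error events indexed by the subset $S$ of wrongly decoded relay codewords, followed by a rate-combination step --- is the same as the paper's. The genuine gap is in your relay decoding constraints. You assert that the packing lemma yields the two \emph{decoupled} conditions $R_0 \le I(W_0,W_{\pi_1}^{\pi_{j-1}};Y_{\pi_j}|X_{\pi_j},W_{\pi_j}^{\pi_N})$ and $R_{\pi_j}\le I(U_{\pi_j};Y_{\pi_j}|W_0^{N},X_{\pi_j})$. What simultaneous decoding of the pair $(m_0,m_{\pi_j})$ actually yields --- and what the paper derives in (\ref{combine_3})--(\ref{combine_4}) --- is the \emph{coupled} pair $R_{\pi_j}<I(U_{\pi_j};Y_{\pi_j}|W_0^{N},X_{\pi_j})$ and $R_{\pi_j}+R_0<I(U_{\pi_j},W_0,W_{\pi_1}^{\pi_{j-1}};Y_{\pi_j}|X_{\pi_j},W_{\pi_j}^{\pi_N})$; there is no standalone bound on $R_0$, because in the error event where $m_0$ is wrong the codeword $U_{\pi_j}$ (being superimposed on $W_0^{N}$) is wrong as well. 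This difference is not cosmetic. Since \emph{every} relay in $T$, not only those in $S^c$, must decode the common message, your decoupled conditions force $R_0\le\min_{\pi_j\in T}I(W_0,W_{\pi_1}^{\pi_{j-1}};Y_{\pi_j}|X_{\pi_j},W_{\pi_j}^{\pi_N})$, and after combination you can only conclude a weaker bound in which $\min_{\pi_j\in T}$ replaces $\min_{\pi_j\in S^c}$ inside every $S$-term of (\ref{thm_pdf_DM-RN}); this is strictly smaller in general, e.g.\ whenever the minimizing $S$ contains the relay with the weakest common-message link. Your own assertion that the bottleneck reduces to $\min_{\pi_j\in S^c}$ is therefore inconsistent with the constraints you stated.

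The repair is exactly the paper's combination step: with the coupled constraints, write, for each $S$, $R=(R_{N+1}+\sum_{i\in S}R_i)+[(R_{\pi_{j^*}}+R_0)+\sum_{i\in S^c,\,i\neq \pi_{j^*}}R_i]$ for a bottleneck relay $\pi_{j^*}\in S^c$, bound the first bracket by the destination MAC term, bound $R_{\pi_{j^*}}+R_0$ by the coupled sum constraint, bound the remaining $R_i$ by their private-rate terms, and then use the chain rule $I(U_{\pi_j},W_0,W_{\pi_1}^{\pi_{j-1}};Y_{\pi_j}|X_{\pi_j},W_{\pi_j}^{\pi_N})=I(W_0,W_{\pi_1}^{\pi_{j-1}};Y_{\pi_j}|X_{\pi_j},W_{\pi_j}^{\pi_N})+I(U_{\pi_j};Y_{\pi_j}|W_0^{N},X_{\pi_j})$ to collapse the result into the second line of (\ref{thm_pdf_DM-RN}). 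The coupling is what lets a relay with a weak common link still support a large $R_0$ by sacrificing its own private rate --- precisely the freedom your decoupled version discards. Two smaller points: the constraint you attribute to the ``all-correct event'' at the destination, $I(U_1^{N},X_0^{N},W_0^{N};Y_{N+1})$, in fact arises from the event in which the common message itself is wrong (so that all superimposed codewords are simultaneously wrong); and a complete elimination of the split rates also produces a $2R$ bound from the destination constraints with $S=\emptyset$ and $S=T$, which the paper must --- and does --- show to be redundant, a check your sketch omits.
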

\begin{proof}
We use a block coding scheme in which each user sends $b-N$ messages over $b$ blocks of $n$ symbols. Each relay and the destination employ simultaneous decoding.\par
\subsection{Codebook generation}
\begin{figure*}
\center
  \includegraphics[scale=0.74]{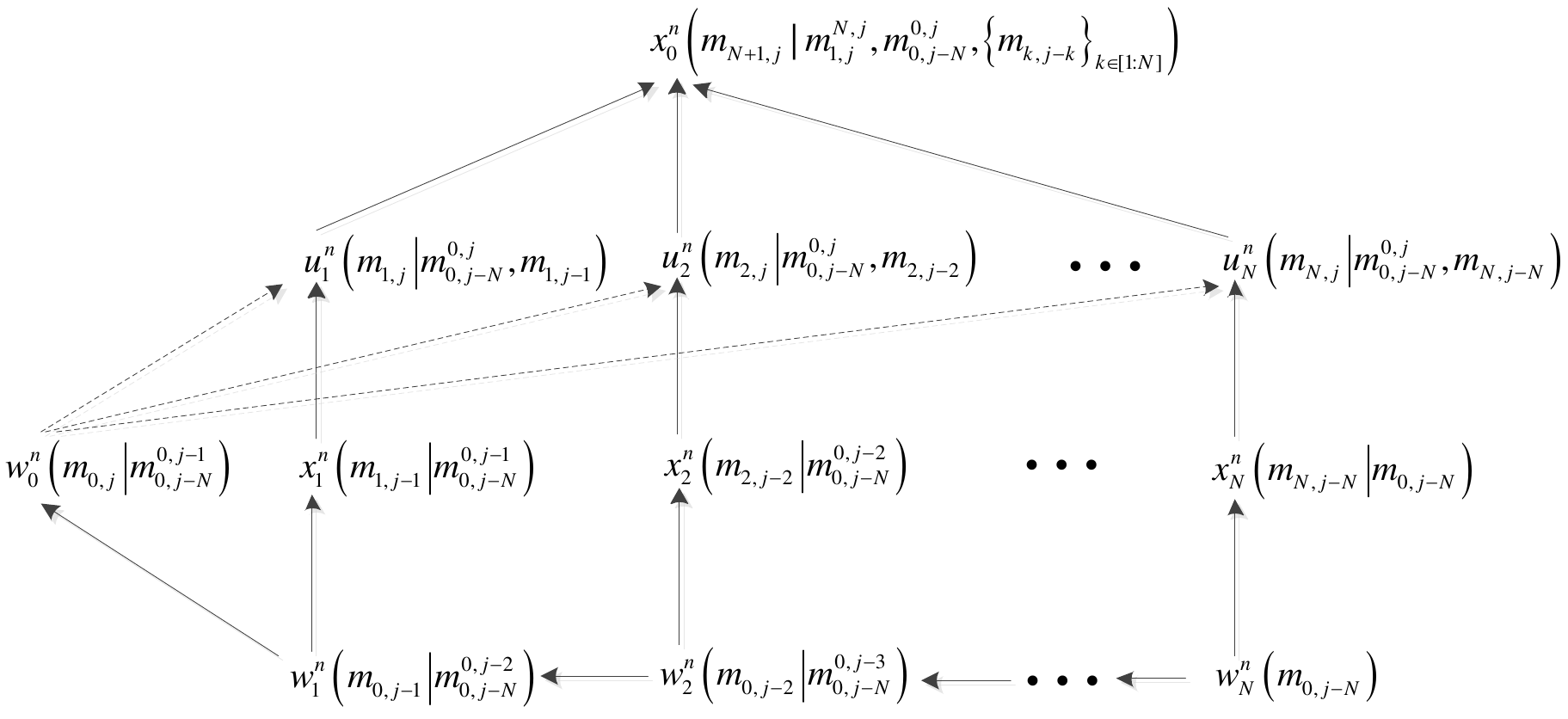}
  \caption{Encoding diagram of a single-source single-destination network with $N$ relays (arrows denote superposition structure)} \label{encoding_diagram}
\end{figure*}
Fix the joint distribution in (\ref{coding_distribution}) where the meaning of each component is as follows: \par 
\begin{itemize}
\item
$p(W_k|W_{k+1}^{N})$: at relay $k$, the current common message is superimposed on previous common messages.
\item
$p(X_k|W_k^{N})$: at relay $k$, the current private message is superimposed on the current and previous common messages.
\item
$p(W_0|W_1^{N})$: at the source, the current common message is superimposed on the common messages of all previous $N$ blocks.
\item
$p(U_k|W_0^{N},X_k)$: at the source, the private message for a specific relay is superimposed on all common messages and the previous private message for that relay.
\item
$p(X_0|U_1^{N},W_0^{N},X_1^{N})$: at the source, the private message for the destination is superimposed on all common messages and all private messages for all relays.
\end{itemize}\par
Figure \ref{encoding_diagram} illustrates the superposition coding structure. In block $j$, the source splits its message as $m_j=[m_{N+1,j},m_{N,j},\dots,m_{1,j},m_{0,j}]$. For every relay node $k=N,\dots,1$ and every message set $\{m_{0,j-N},\dots,m_{0,j-k}, m_{k,j-k}\}$:
\begin{itemize}
\item
Randomly and independently generate $2^{nR_0}$ sequences $w_k^n(m_{0,j-k}|m_{0,j-N}^{0,j-k-1})$ for all $m_{0,j-k}\in [1:2^{nR_0}]$, each according to $\prod_{i=1}^np_{W_k|W_{k+1}^{N}}(w_{ki}|w_{k+1,i}^{N,i})$,
\item
Randomly and independently generate $2^{nR_k}$ sequences $x_k^n(m_{k,j-k}|m_{0,j-N}^{0,j-k})$ for all $m_{k,j-k}\in [1:2^{nR_k}]$, each according to $\prod_{i=1}^np_{X_k|W_{k}^{N}}(x_{ki}|w_{k,i}^{N,i})$.
\end{itemize}
For source node $k=0$ and each message set $m_{0,j}^{N+1,j}$.
\begin{itemize}
\item
For all sequence $w_k^n(m_{0,j-k}|m_{0,j-N}^{0,j-k-1})$ with $k\in T$, randomly and independently generate $2^{nR_0}$ sequences $w_0^n(m_{0,j}|m_{0,j-N}^{0,j-1})$ for $m_{0,j}\in [1:2^{nR_0}]$, each according to $\prod_{i=1}^np_{W_0|W_1^{N}}(w_{0,i}|w_{1,i}^{N,i})$,
\item
For each $k\in T$, randomly and independently generate $2^{nR_k}$ sequences $u_k^n(m_{k,j}|m_{0,j-N}^{0,j},m_{k,j-k})$ for all $m_{k,j-k}\in [1:2^{nR_k}]$, each according to $\prod_{i=1}^np_{U_k|W_{0}^{N},X_k}(u_{ki}|w_{0,i}^{N,i},x_{k,i})$,
\item
Randomly and independently generate $2^{nR_{N+1}}$ sequences $x_0^n(m_{N+1,j}|m_{1,j}^{N,j},m_{0,j-N}^{0,j},\left\{m_{k,j-k}\right\}_{k\in T})$ for all $m_{N+1,j}\in [1:2^{nR_{N+1}}]$, each according to \par $\prod_{i=1}^np_{X_0|U_1^{N},W_{0}^{N},X_1^{N}}(x_{0i}|u_{1,i}^{N,i},w_{0,i}^{N,i},x_{1,i}^{N,i})$.
\end{itemize}\par
The codebook is independently generated in each block as above and then is revealed to all the parties.
\subsection{Encoding}
To send $\{m_{N+1,j},\dots,m_{0j}\}$ in block $j$, the source transmits $x_0^n(m_{N+1,j}|m_{1,j}^{N,j},m_{0,j-N}^{0,j},\{m_{k,j-k}\}_{k\in T})$ from codebook $\mathcal{C}_j$. At the end of block $j$, each relay $k\in T$ has an estimate $\tilde{m}_{k,j-k+1}$ of message $m_{k,j-k+1}$ and $\tilde{m}_{0,j-k+1}$ of message $m_{0,j-k+1}$. In the block $j+1$, each relay $k\in T$ transmits $x_k^n(\tilde{m}_{k,j-k+1},\tilde{m}_{0,j-N+1}^{0,j-k+1})$ from codebook $\mathcal{C}_{j+1}$.
\subsection{Decoding}
\subsubsection{Simultaneous decoding at the first relay}
At the end of block $j$, by knowing $m_{0,j-N}^{0,j-1}$ and $m_{2,j-1}$, the first relay $k=1$ decodes $m_{1,j}$ and $m_{0,j}$ such that:
\begin{align}
\nonumber
\left(
u_1^n(m_{1,j}|m_{0,j-N+1}^{0,j},m_{1,j-1}),
w_0^n(m_{0,j}|m_{0,j-N+1}^{0,j-1}),\right.\\
\nonumber
w_1^n(m_{0,j-1}|m_{0,j-N}^{0,j-2}),
x_1^n(m_{1,j-1}|m_{0,j-N}^{0,j-1}),\\
\left.w_2^n,w_3^n,\dots,w_{N}^n,
y_1^n(j)\right)
\in T_\epsilon^{(n)}.
\end{align}
The decoding error probability goes to 0 as $n\to\infty$, if
\begin{align}
R_1&<I(U_1;Y_2|W_0^{N},X_1),\\
R_1+R_0&<I(U_1,W_0;Y_2|X_1,W_1^{N}).
\end{align}
\subsubsection{Simultaneous sliding window decoding at other relays $k\in[2:N]$}
At the end of block $j$, by knowing $m_{0,j-N-k+1}^{0,j-k}$ and $m_{k,j-k}$, the relay node $k$ will decode $m_{k,j-k+1}$ and $m_{0,j-k+1}$ such that the following conditions hold simultaneously:
\begin{align*}
\nonumber
\left(w_{k-1}^n(m_{0,j-k+1}|m_{0,j-N}^{0,j-k}),w_k^n(m_{0,j-k}|m_{0,j-N}^{0,j-k-1}),
\right.
\\
\nonumber
\left.x_k^n(m_{k,j-k}|m_{0,j-N}^{0,j-k}),
w_{k+1}^n,w_{k+2}^n,\dots,w_{N}^n,
y_k^n(j)
\right)
\in T_\epsilon^{(n)}.\\
\vdots~~~~~~~~~~~~~~~~~~~~~~~~~~~~~~~~~~\\
\nonumber
\left(w_1^n(m_{0,j-k+1}|m^{0,j-k}_{0,j-k-N+2}),
\nonumber
w_2^n,\dots,w_{k-1}^n,w_k^n,
x_k^n,\right.\\
\left.
w_{k+1}^n,\dots,w_{N}^n,
y_k^n(j-k+2)
\right)
\in T_\epsilon^{(n)}.
\end{align*}
\begin{align}
\nonumber
\left(
u_k^n(m_{k,j-k+1}|m_{0,j-N-k+1}^{0,j-k+1},m_{k,j-2k+1}),\right.\\
\nonumber
w_0^n(m_{0,j-k+1}| m_{u,j-N-k+1}^{0,j-k}),
w_1^n,w_2^n,\dots,
w_{k-1}^n,\\
\left.
x_k^n,w_k^n,
 w_{k+1}^n,\dots,w_{N}^n,
y_k^n(j-k+1)
\right)
\in T_\epsilon^{(n)}.
\end{align}
Therefore, there are $k$ decoding rules to be satisfied simultaneously at relay $k$. The decoding error probability goes to 0, as $n\to\infty$, if
\begin{align}
\label{combine_3}
R_k&<I(U_k;Y_k|W_0^{N},X_k),\\
\label{combine_4}
R_k+R_0&<I(U_k,W_0^{k-1};Y_k|X_k,W_k^{N}).
\end{align}
Detailed error analysis at relay $k$ is in Appendix \ref{error_analysis_relay}.
\subsubsection{Simultaneous sliding window decoding at destination node $N+1$}
At the end of block $j$, the destination node $N+1$ will decode $m_{k,j-N}$ for all $k\in T\cup\{N+1\}$ and $m_{0,j-N}$ such that the following conditions hold simultaneously:
\begin{align*}{}
\left(
\nonumber
x_{N}^n(m_{N,j-N}|m_{0,j-N}),
w_{N}^n(m_{0,j-N}),
y_{N+1}^n(j)\right)
\in T_\epsilon^{(n)}.\\
\nonumber
\vdots~~~~~~~~~~~~~~~~~~~~~~~~~~~~~~~~~~~~~~~~~~~\\
\left(
\nonumber
x_{N-i+1}^n(m_{N-i+1,j-N}|m_{0,j-N+1-i}^{0,j-N}),\right.\\
\nonumber
w_{N-i+1}^n(m_{0,j-N}|m_{0,j-N+1-i}^{0,j-N}),\\
\left.x_{N-i+1}^n,w_{N-i+1}^n,\dots,x_{N}^n,w_{N}^n,
y_{N+1}^n(j-i+1)
\right)
\in T_\epsilon^{(n)}.\\
\nonumber
\vdots~~~~~~~~~~~~~~~~~~~~~~~~~~~~~~~~~~~~~~~~~~~\\
\left(
\nonumber
x_1^n(m_{1,j-N}|m_{0,j-2N+1}^{0,j-N}),
w_1^n(m_{0,j-N}|m_{0,j-2N+1}^{0,j-N-1}),\right.\\
\left.x_2^n,w_2^n,\dots,x_{N}^n,w_{N}^n,
y_{N+1}^n(j-N+1)
\right)
\in T_\epsilon^{(n)}.
\end{align*}
And,
\begin{align}
\nonumber
\left(
\{u_k^n(m_{k,j-N}|m_{0,j-2N}^{0,j-N},m_{k,j-N-k})\}_{k \in T},\right.\\
\nonumber
x_0^n(m_{N+1,j-N}|m_{1,j-N}^{N,j-N},
m_{0,j-2N}^{0,j-N},\\
\nonumber
     ~~~ \{m_{k,j-N-k}\}_{k \in T}),
w_0^n(m_{0,j-N}|m_{0,j-2N}^{0,j-N-1}),\\
\left.
x_1^n,w_1^n,x_2^n,w_2^n,\dots,x_{N}^n,w_{N}^n,
y_{N+1}^n(j-N)
\right)
\in T_\epsilon^{(n)}.
\end{align}
We have $N+1$ decoding rules to be satisfied simultaneously. The decoding error probability goes to 0, as $n\to\infty$, if
\begin{align}
\label{combine}
\sum_{i=0}^{N+1}R_i&<I(U_1^{N},X_0^{N},W_0^{N};Y_{N+1}),\\
\label{combine_2}
\sum_{i\in S} R_i+R_{N+1}&<I(X_0,X_S,U_S;Y_N|X_{S^c},U_{S^c},W_0^{N}),\\
\label{combiniation_5}
R_{N+1}&<I(X_0;Y_{N+1}|U_1^{N},X_1^{N},W_0^{N}),\\
\label{combiniation_6}
\sum_{i=1}^{N+1}R_i&<I(U_1^{N},X_0^{N};Y_{N+1}|W_0^{N}),
\end{align}
where $S$ is a subset of $T$ that contains wrongly decoded messages at the relay. Detailed error analysis at the destination is in Appendix \ref{error_analysis_destination}.

\subsection{Combination Process}
From (\ref{combine}), we can directly get that:
\begin{align}
R<I(U_1^{N},X_0^{N},W_0^{N};Y_{N+1}).
\end{align}\par
%
%
From (\ref{combine_2}), (\ref{combine_3}) and (\ref{combine_4}), we get
\begin{align}
\label{single_R_res1}
\nonumber
R&=\left(R_{N+1}+\sum_{i\in S} R_i\right)+\min_{j\in S^c}\left[\left(R_j+R_0\right)+\left(\sum_{i\in S^c,i\neq j} R_i\right)\right]\\
&<I(X_0,X_{S},U_{S};Y_{N+1}|X_{S^c},U_{S^c},W_0^{N})\\
\nonumber
&~~~+\min_{j\in S^c}I(W_0^{j-1};Y_j|X_j,W_j^{N})+\sum_{i\in S^c} I(U_i;Y_i|W_0^{N},X_i),
\end{align}
for all $S\subset T$.\par
From (\ref{combine_3}), (\ref{combine_4}), (\ref{combiniation_5}) and (\ref{combiniation_6}), we get
\begin{align}
\label{double_R_res2}
\nonumber
2R&<\sum_{i=1}^{N+1}R_i+R_{N+1}\\
\nonumber
&~~+\min_{i,j\in T}\left[ \left(R_i+R_0\right)+ \left(R_j+R_0\right)+\sum_{l \in T, l\neq i,j}
R_l\right]\\
\nonumber
&<I(U_1^{N},X_0^{N};Y_{N+1}|W_0^{N})+I(X_0;Y_{N+1}|U_1^{N},X_1^{N},W_0^{N})\\
\nonumber
&~~+2\min_{j\in T}I(W_0^{j-1};Y_j|X_j,W_j^{N})\\
&~~+2\sum_{i\in T} I(U_i;Y_i|W_0^{N},X_i),
\end{align}\par
However, if we let $S=\emptyset$ in (\ref{single_R_res1}) and double the right-hand-side (RHS) expression, then we can get a smaller expression than the RHS of (\ref{double_R_res2}). Thus, (\ref{double_R_res2}) is redundant.\par
After this combination process, we get the rate in (\ref{thm_pdf_DM-RN}).
\end{proof}
\subsection{Special Networks}
\begin{figure}[t]
\center
  \includegraphics[scale=0.75]{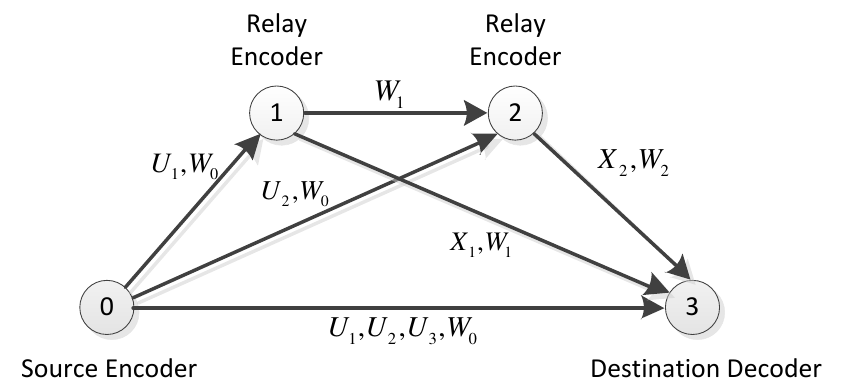}
  \caption{Two-level relay network with partial-decode-forward} \label{pdf_two_relay}
\end{figure}
\begin{figure}[t]
\center
  \includegraphics[scale=0.68]{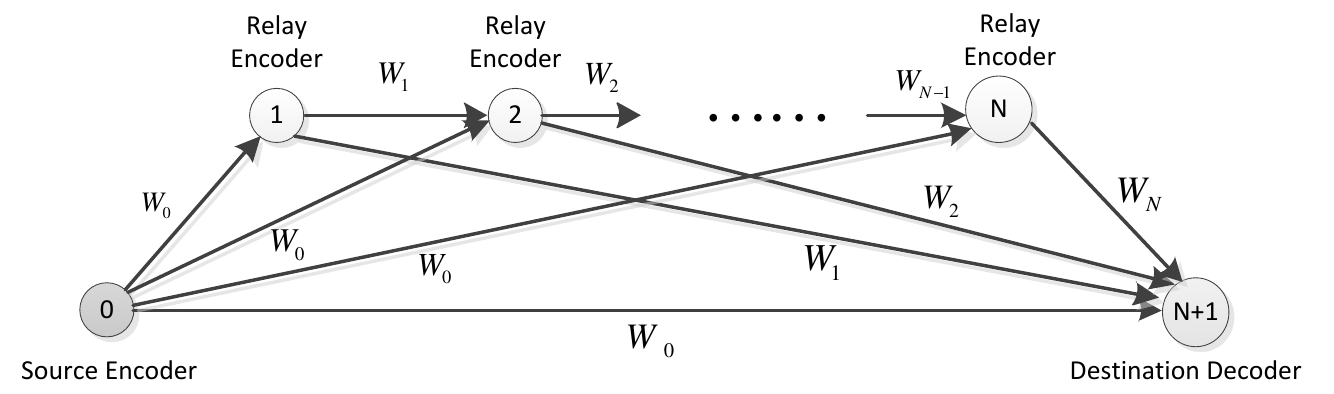}
  \caption{Decode-forward relay network} \label{DF_RN_1}
\end{figure}
If $N=2$, we have the partial decode-and-forward lower bound for a two-level relay network as shown in Figure \ref{pdf_two_relay}, which coincides with the result in \cite{aref2010pdftwolevel}.\par
For a general $N$, if we set private parts $U_1^{N}=0$ and $m_{N+1}=0$, we can get Xie and Kumar's \cite{Xie2005multilevelrelay} network decode-forward lower bound as shown in Figure \ref{DF_RN_1}. Furthermore, if $N=1$, it reduces to the decode-forward lower bound \cite{cover1979capacity} for the discrete-memoryless relay channel (DM-RC).
\section{Gaussian Relay Networks}\label{sec_gaussian_two_level}
The Gaussian relay network can be modeled as
\begin{align}
Y_k=\sum_{i=0}^{k-1}g_{ik}X_i+Z_k,
\end{align}
where $k\in T\cup\{N+1\}$, $g_{ik}$ is the coefficient of the link from node $i$ to node $j$, and $Z_k$ is noise at the decoder with Gaussian distribution $\mathcal{N}(0,1)$. There is a power constraint at each transmitting node as $P_k$.\par
As shown in Figure \ref{pdf_two_relay}, the Gaussian two-level relay network can be modeled as
\begin{align}
\label{gaussian_two_relay}
\nonumber
Y_3&=g_{03}X_0+g_{13}X_1+g_{23}X_2+Z_3,\\
Y_2&=g_{02}X_0+g_{12}X_1+Z_2, ~~~~Y_1=g_{01}X_0+Z_1,
\end{align}
where $Z_3$, $Z_2$ and $Z_1$ are independent AWGN noise according to the normal distribution $\mathcal{N}(0,1)$. The signaling at each node can be written as
\begin{align}
\label{gaussian_two_relay_signaling}
\nonumber
x_2&=\alpha_{22} W_2(w_{0,j-2})+\beta_{22} V_2(w_{2,j-2}),\\
\nonumber
x_1&= \alpha_{11} W_1(w_{0,j-1})+\alpha_{12} W_2(w_{0,j-2})+\beta_{11} V_1(w_{1,j-1}),\\
\nonumber
x_0&=\alpha_{00} W_0(w_{0,j})+ \alpha_{01} W_1 (w_{0,j-1})+ \alpha_{02} W_2(w_{0,j-2})\\
\nonumber
&~~~
+\beta_{01}V_1(w_{1,j-1})+\beta_{02}V_2(w_{2,j-2})\\
&~~~+\phi_{01} U_1(w_{1,j})+ \phi_{02} U_2(w_{2,j})+\phi_{03} U_3(w_{3,j}),
\end{align}
where $W_2$, $V_2$, $W_1$, $V_1$, $W_0$, $U_1$, $U_2$, $U_3$ are independent, normalized Gaussian random variables $\mathcal{N}(0,1)$; $\{\alpha_{*},\beta_{*},\phi_{*}\}$ are power allocations satisfying the following constraints:
\begin{align}
\nonumber
\label{power_allocation}
\alpha_{22}^2+\beta_{22}^2 = P_2,~~~~~~
\alpha_{11}^2+\alpha_{12}^2+\beta_{11}^2 &= P_1,\\
\!\!\!\!\!\!\!\!\alpha_{00}^2+\alpha_{01}^2+
 \alpha_{02}^2+\beta_{01}^2+\beta_{02}^2
 +\phi_{01}^2+\phi_{02}^2+\phi_{03}^2&=P_0,
\end{align}
where $P_0$, $P_1$ and $P_2$ are power constraints at the corresponding node, which can be set equal to each other without loss of generality.\par
\begin{cor}
The capacity for a Gaussian two-level relay network in (\ref{gaussian_two_relay}) is lower bounded by:
\begin{align}
\label{cor:two_level_gaussian}
C\geq\min \left\{
I_1+I_4+I_5,
I_2+I_3+I_5,
I_2+I_7,
I_4+I_6,
I_8 \right\},
\end{align}
where
\begin{align*}
I_1&= \frac{1}{2}\log\left(1+\frac{g_{01}^2\phi_{01}^2}{g_{01}^2(\beta_{02}^2+\phi_{02}^2+\phi_{03}^2)+1}\right),\\
I_4&= \frac{1}{2}\log\left(1+\frac{g_{02}^2(\alpha_{00}+\phi_{02})^2+(g_{02}\alpha_{01}+g_{12}\alpha_{11})^2}{(g_{02}\beta_{01}+g_{12}\beta_{11})^2+g_{02}^2(\phi_{01}^2+\phi_{03}^2)+1}\right),\\
I_3&= \frac{1}{2}\log\left(1+\frac{g_{02}^2\phi_{02}^2}{(g_{02}\beta_{01}+g_{12}\beta_{11})^2+g_{02}^2(\phi_{01}^2+\phi_{03}^2)+1}\right),\\
I_2&= \frac{1}{2}\log\left(1+\frac{g_{01}^2(\alpha_{00}^2+\phi_{01}^2)}{g_{01}^2(\beta_{02}^2+\phi_{02}^2+\phi_{03}^2)+1}\right),\\
I_5&= \frac{1}{2}\log\left(1+g_{03}^2\phi_{03}^2\right),\\
I_6&= \frac{1}{2}\log\left(1+(g_{03}\beta_{01}+g_{13}\beta_{11})^2+g_{03}^2(\phi_{01}^2+\phi_{03}^2)\right),\\
I_7&= \frac{1}{2}\log\left(1+(g_{03}\beta_{02}+g_{23}\beta_{22})^2+g_{03}^2(\phi_{02}^2+\phi_{03}^2)\right),\\
I_8&= \frac{1}{2}\log\left(1+g_{03}^2P_0+g_{13}^2P_1+g_{23}^2P_2\right.\\
&~~~~+2g_{03}g_{13}(\alpha_{01}\alpha_{11}+\alpha_{02}\alpha_{12}+\beta_{01}\beta_{11})\\
&~~~~\left.+2g_{03}g_{23}(\alpha_{02}\alpha_{22}+\beta_{02}\beta_{22})+2g_{13}g_{23}\alpha_{12}\alpha_{22}\right),
\end{align*}
 and $\alpha_{ij},\beta_{ij},\phi_{ij}$ ($i\in\{0,1,2\}$,$j\in\{0,1,2,3\}$) are power allocations satisfying (\ref{power_allocation}).
 \end{cor}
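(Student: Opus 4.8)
The plan is to obtain (\ref{cor:two_level_gaussian}) as a direct specialization of the general bound (\ref{thm_pdf_DM-RN}) to $N=2$ with the nominal order $\pi=[1,2]$, evaluating every mutual-information term under the Gaussian input in (\ref{gaussian_two_relay_signaling}). First I would write the bound out explicitly for $N=2$: the full-decoding term $I(U_1^2,X_0^2,W_0^2;Y_3)$ together with the three subset terms for $S\in\{\emptyset,\{1\},\{2\}\}$, each consisting of a destination multiple-access term $I(X_0,X_S,U_S;Y_3\mid X_{S^c},U_{S^c},W_0^2)$, a relay common-message term $\min_{j\in S^c}I(W_0^{\,j-1};Y_j\mid X_j,W_j^2)$, and the private terms $\sum_{j\in S^c}I(U_j;Y_j\mid W_0^2,X_j)$.

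Next I would fix the jointly Gaussian test distribution of (\ref{gaussian_two_relay_signaling}): independent $\mathcal N(0,1)$ auxiliaries $W_0,W_1,W_2,V_1,V_2,U_1,U_2,U_3$ with $X_0,X_1,X_2$ the prescribed linear combinations whose coefficients obey (\ref{power_allocation}). I would first check that this assignment is admissible, i.e. that the superposition structure (each $X_k$ built only from $W_k^N$ and the forwarded $V_k$, each $U_k$ layered on $W_0^N$ and $X_k$, and $X_0$ on everything) is consistent with the Markov factorization (\ref{coding_distribution}). Every term is then Gaussian, so each conditional mutual information reduces to $\tfrac12\log(1+\mathrm{SINR})$: collecting the coefficients of the to-be-decoded auxiliaries along $Y_k=\sum_i g_{ik}X_i+Z_k$ gives the numerator, and the variance of the residual undecoded auxiliaries, after deleting everything recoverable from the conditioning, gives the interference-plus-noise denominator.

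Carrying out this bookkeeping matches the individual terms: the full-decoding term collapses, via the power identities, to the total received-power expression $I_8$; relay~$1$ gives $I_1=I(U_1;Y_1\mid W_0^2,X_1)$ and $I_2=I(U_1,W_0;Y_1\mid X_1,W_1,W_2)$; relay~$2$ gives $I_3=I(U_2;Y_2\mid W_0^2,X_2)$; and the three destination terms give $I_5$, $I_6$ and $I_7$ for $S=\emptyset,\{1\},\{2\}$ respectively. The remaining relay common-message terms $A_1=I(W_0;Y_1\mid X_1,W_1,W_2)$ and $A_2=I(W_0,W_1;Y_2\mid X_2,W_2)$ are then merged with the private terms using the elementary telescoping identity $\tfrac12\log(1+\tfrac{a}{b+c})+\tfrac12\log(1+\tfrac{c}{b})=\tfrac12\log(1+\tfrac{a+c}{b})$, valid precisely when the weaker term's noise floor $b+c$ contains the stronger term's signal $c$; this yields $A_1+I_1=I_2$ and $A_2+I_3=I_4$. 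Assembling the pieces, $S=\emptyset$ produces $\min\{I_2+I_3+I_5,\ I_1+I_4+I_5\}$, $S=\{1\}$ produces $I_4+I_6$, $S=\{2\}$ produces $I_2+I_7$, and the full-decoding term gives $I_8$; taking the overall minimum is exactly (\ref{cor:two_level_gaussian}).

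The step I expect to be most delicate is the conditioning bookkeeping at the destination and at relay~$2$: one must track which auxiliaries are eliminated not only by direct conditioning but also implicitly (for instance $V_1$ is recoverable from $(X_1,W_1,W_2)$ and $V_2$ from $(X_2,W_2)$), so that each variable is correctly classified as signal, known, or residual interference, and so that the nested denominators required by the telescoping identity actually line up. Verifying that the merge $A_2+I_3$ reproduces the exact coherent-combining numerator of $I_4$ is the concrete checkpoint for this step. Throughout, the power-allocation coefficients $\{\alpha_{*},\beta_{*},\phi_{*}\}$ remain free parameters satisfying (\ref{power_allocation}), so no explicit optimization over them is carried out and the bound holds for every admissible choice.
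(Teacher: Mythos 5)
Your proposal is correct and follows essentially the same route as the paper, whose entire proof is the one-line remark that the signaling \eqref{gaussian_two_relay_signaling} is substituted into the rate region of Theorem~1; your $N=2$ specialization, SINR bookkeeping (including the implicit recovery of $V_1$ from $(X_1,W_1,W_2)$ and $V_2$ from $(X_2,W_2)$), and the telescoping merges $A_1+I_1=I_2$, $A_2+I_3=I_4$ simply make that substitution explicit, and they check out term by term against $I_1,\dots,I_8$ and the subsets $S\in\{\emptyset,\{1\},\{2\}\}$. One caveat at your own stated checkpoint: since $W_0$ and $U_2$ are independent, the merge $A_2+I_3$ yields the numerator $g_{02}^2(\alpha_{00}^2+\phi_{02}^2)+(g_{02}\alpha_{01}+g_{12}\alpha_{11})^2$ with no coherent cross term, so the factor $(\alpha_{00}+\phi_{02})^2$ in the paper's $I_4$ appears to be a typo for $\alpha_{00}^2+\phi_{02}^2$ rather than a ``coherent-combining numerator'' your derivation should reproduce.
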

 \begin{proof}
 Applying the signaling in \eqref{gaussian_two_relay_signaling} to the rate region in Theorem \ref{thm_pdf_DM-RN}, we obtain \eqref{cor:two_level_gaussian}.
 \end{proof}
\section{Conclusion}\label{sec_concluding}
In this paper, we consider partial decode-forward relaying in a single-source single-destination network with $N$ relays. We design a scheme in which each relay forwards the common message part and a specific private part to the following nodes. The proof is based on block Markov encoding and simultaneous sliding window decoding. The key point is that each relay decodes and forwards its private part only when the last common part with the same block index arrives. We then obtain the achievable rate for this scheme, which can be expressed in a compact form over all cutsets and permutations of relays. We show that this scheme contains existing results for an $N$-relay network with decode-forward and a two-level relay network with partial decode-forward considering all message splitting cases. We then study the Gaussian two-level relay network and derive the achievable rate by the proposed scheme.
\appendices
\section{Error analysis at relay $k$}
\label{error_analysis_relay}
\begin{figure*}
\begin{align}
\nonumber
P_e\leq& P_c(E^c(\{1\}_{N+1},1))+\sum_{m_{0,j-N}\neq 1,\{m_{i,j-N}\}_{i\in T}, m_{N+1,j-N}}P_c(E(\{m_{i,j-N}\}_{i\in T},m_{N+1,j-N}, m_{0,j-N}))\\
&+\sum_{m_{0,j-N}=1,\{m_{i,j-N}\}_{i\in S}\neq1,\{m_{i,j-N+2}\}_{i\in S^c}=1, m_{N+1,j-N}} P_c(E(\{m_{i,j-N}\}_{i\in S},\{1\}_{|S^c|},m_{N+1,j-N}))
\label{union_bound_destination}
\end{align}
\end{figure*}
Assume without loss of generality that $(m_{k,j-k+1}, m_{0,j-k+1})=(1,1)$ is sent in block $j$. \par
We first define the following events: \par
\begin{itemize}
\item
$E_i(m_{k,j-k+1}, m_{0,j-k+1})$, for $i\in[1:k]$, is when only the $i$th decoding rule is satisfied. We simplify the notation as $E_i$ in the following analysis.
\item
$E(m_{k,j-k+1}, m_{0,j-k+1})$ as the event that all decoding rules are satisfied simultaneously.
\end{itemize}\par
Then, by the union bounds, the probability of error is bounded as
\begin{align*}
P_e&\leq P_c(E^c(1,1))\\
&~~~+\sum_{m_{0,j-k+1}=1,m_{k,j-k+1}\neq1 }P_c(E(m_{k,j-k+1}, 1))\\
&~~~+\sum_{m_{0,j-k+1}\neq1,m_{k,j-k+1} }P_c(E(m_{k,j-k+1}, m_{0,j-k+1})),
\end{align*}
where $P_c$ is the conditional probability given that $(1,1)$ was sent.\par
By the law of large number, $P_c(E^c(1,1))\to0$ as $n\to \infty$.\par
By the joint typicality lemma, we have
\begin{align*}
\sum_{ m_{0,j-k+1}=1,m_{k,j-k+1}\neq1}P_c(E(m_{k,j-k+1}, 1))\\
\leq 2^{nR_k}\times2^{-n(I(U_k;Y_k|W_0^{N},X_k)-\delta(\epsilon))},
\end{align*}\par
which goes to 0 as $n\to \infty$, if
\begin{align*}
R_k<I(U_k;Y_k|W_0^{N},X_k)-\delta(\epsilon).
\end{align*}\par
According to the independence of the codebooks and the joint typicality lemma,
\begin{align*}
&\sum_{m_{0,j-k+1}\neq1,m_{k,j-k+1} }P_c(E(m_{k,j-k+1}, m_{0,j-k+1}))\\
&=P_c(\cup_{ m_{k,j-k+1}}\cup_{m_{0,j-k+1\neq1}}(E_1\cap E_2\cap\dots\cap E_{k}))\\
&\leq \sum_{m_{k,j-k+1}}\sum_{m_{0,j-k+1\neq1}}P(E_1)\times P(E_2)\times\dots\times P(E_{k})\\
&\leq 2^{nR_k} \times 2^{nR_0} \times
2^{-n(I(W_{k-1};Y_k|X_k,W_k^{N})
-\delta(\epsilon))}\\
&~~~\times 2^{-n(I(W_{k-2};Y_k|X_k,W_{k-1}^{N})
-\delta(\epsilon))}\times\dots\times\\
&~~~2^{-n(I(W_2;Y_k|X_k,W_3^{N})
-\delta(\epsilon))}
\times
2^{-n(I(U_k,W_1;Y_k|X_k,W_2^{N})
-\delta(\epsilon))},
\end{align*}
which tends to 0 as $n\to \infty$ if
\begin{align*}
R_k+R_0<I(U_k,W_1^{k-1};Y_k|X_k,W_k^{N})-k\delta(\epsilon).
\end{align*}

\section{Error analysis at destination $N+1$}
\label{error_analysis_destination}

Assume without loss of generality that $(\{m_{k,j-N}\}_{k\in T}, m_{0,j-N},m_{N+1,j-N})=(1,1,\dots,1)$ was sent in block $j$. \par
We first define the following events:
\begin{itemize}
\item
$E_i(\{m_{k,j-N}\}_{k\in T}, m_{0,j-N},m_{N+1,j-N})$, $i\in [1:N+1]$, as the event that only the $i$th decoding rule is satisfied. We simplify the notation as $E_i$ in the following analysis.
\item
$E(\{m_{k,j-N}\}_{k\in T}, m_{0,j-N},m_{N+1,j-N})$ as the event that all $N+1$ decoding rules are satisfied simultaneously.
\end{itemize}\par
We define set $S$ to be the set of wrongly decoded private messages and $S^c$ as the set of correctly decoded private messages at the destination. Then, by the union of events bound, the probability of error is bounded as in (\ref{union_bound_destination}), where $P_c$ is the conditional probability given that $(\{m_{k,j-N}\}_{k\in T}, m_{0,j-N},m_{N+1,j-N})=(1,1,\dots,1)$ was sent.
By the law of large number, $P_c(E^c(\{1\}_{N+1},1))\to0$ as $n\to \infty$. It is impossible to correctly decode  $m_{N+1,j-N}$ if any of $\{m_{k,j-N}\}_{k\in T}$ isn't decoded correctly.
According to the independence of codebooks and the joint typicality lemma, the third term in (\ref{union_bound_destination})  tends to 0 as $n\to \infty$ if
\begin{align*}
\sum_{i\in S} R_i+R_{N+1}<&I(X_0,\{X_i,U_i\}_{i\in S};Y_{N+1}|\{X_j,U_j\}_{j\in S^c},W_0^{N})\\
&-(|S|+1)\delta(\epsilon).
\end{align*}\par
Similarly, according to the independence of codebooks and the joint typicality lemma, the second term in (\ref{union_bound_destination}) tends to 0 as $n\to \infty$ if
\begin{align*}
\sum_{i=1}^{N+1} R_i+R_0<I(U_1^{N},X_0^{N},W_0^{N};Y_{N+1})-(N+1)\delta(\epsilon).
\end{align*}
\bibliographystyle{IEEEtran}
\bibliography{reflist}

\end{document}